\documentclass[draftcls, 12pt,onecolumn]{IEEEtran}

\usepackage{cite}
\usepackage[cmex10]{amsmath}
\usepackage{amssymb}
\usepackage{amsthm}
\usepackage{mathrsfs}
\usepackage{bm}
\usepackage[mathscr]{eucal}
\usepackage{amssymb,amsmath,amsthm,amsfonts,latexsym}
\usepackage{amsmath,graphicx,bm,xcolor,url}
\usepackage{graphicx}
\usepackage{latexsym}
\usepackage{CJK}
\usepackage{indentfirst}
\usepackage{geometry}
\usepackage{psfrag}
\geometry{left=0.88in,right=0.88in,top=1.0in,bottom=1.0in}
\usepackage{setspace}
\usepackage{algorithmic}
\usepackage{algorithmic, cite}
\usepackage{algorithm}
\usepackage{array}
\usepackage{mdwmath}
\usepackage{mdwtab}
\usepackage{eqparbox}
\usepackage{url}
\usepackage{epstopdf}
\usepackage{epsfig,epsf,psfrag}
\usepackage{fixltx2e}
\usepackage{verbatim}
\usepackage{textcomp}
\hyphenation{op-tical net-works semi-conduc-tor}
\hyphenation{}

\usepackage{psfrag} 

\usepackage{caption}
\usepackage{subcaption}

%


\theoremstyle{plain}
\newtheorem{mythe}{Theorem}
\theoremstyle{remark}

\theoremstyle{plain}

\theoremstyle{remark}

\theoremstyle{plain}

\theoremstyle{remark}
\newtheorem{myrem}{Remark}
\theoremstyle{remark}

\theoremstyle{remark}

\theoremstyle{remark}

\theoremstyle{remark}

\theoremstyle{remark}


\catcode`~=11 \def\UrlSpecials{\do\~{\kern -.15em\lower .7ex\hbox{~}\kern .04em}} \catcode`~=13

\allowdisplaybreaks[4]


\newcommand{\calC}{\mathcal{C}}


\newcommand{\bb}{\mathbf{b}}
\newcommand{\bB}{\mathbf{B}}

\newcommand{\bi}{\mathbf{i}}
\newcommand{\bI}{\mathbf{I}}

\newcommand{\boldm}{\mathbf{m}}
\newcommand{\bM}{\mathbf{M}}

\newcommand{\bQ}{\mathbf{Q}}

\newcommand{\bR}{\mathbf{R}}

\newcommand{\bS}{\mathbf{S}}

\newcommand{\bT}{\mathbf{T}}
\newcommand{\bu}{\mathbf{u}}
\newcommand{\bU}{\mathbf{U}}

\newcommand{\bX}{\mathbf{X}}



\newcommand{\bbC}{\mathbb{C}}

\newcommand{\bbR}{\mathbb{R}}



\DeclareMathAlphabet{\mathbsf}{OT1}{cmss}{bx}{n}
\DeclareMathAlphabet{\mathssf}{OT1}{cmss}{m}{sl}

\DeclareSymbolFont{bsfletters}{OT1}{cmss}{bx}{n}
\DeclareSymbolFont{ssfletters}{OT1}{cmss}{m}{n}
\DeclareMathSymbol{\bsfGamma}{0}{bsfletters}{'000}
\DeclareMathSymbol{\ssfGamma}{0}{ssfletters}{'000}
\DeclareMathSymbol{\bsfDelta}{0}{bsfletters}{'001}
\DeclareMathSymbol{\ssfDelta}{0}{ssfletters}{'001}
\DeclareMathSymbol{\bsfTheta}{0}{bsfletters}{'002}
\DeclareMathSymbol{\ssfTheta}{0}{ssfletters}{'002}
\DeclareMathSymbol{\bsfLambda}{0}{bsfletters}{'003}
\DeclareMathSymbol{\ssfLambda}{0}{ssfletters}{'003}
\DeclareMathSymbol{\bsfXi}{0}{bsfletters}{'004}
\DeclareMathSymbol{\ssfXi}{0}{ssfletters}{'004}
\DeclareMathSymbol{\bsfPi}{0}{bsfletters}{'005}
\DeclareMathSymbol{\ssfPi}{0}{ssfletters}{'005}
\DeclareMathSymbol{\bsfSigma}{0}{bsfletters}{'006}
\DeclareMathSymbol{\ssfSigma}{0}{ssfletters}{'006}
\DeclareMathSymbol{\bsfUpsilon}{0}{bsfletters}{'007}
\DeclareMathSymbol{\ssfUpsilon}{0}{ssfletters}{'007}
\DeclareMathSymbol{\bsfPhi}{0}{bsfletters}{'010}
\DeclareMathSymbol{\ssfPhi}{0}{ssfletters}{'010}
\DeclareMathSymbol{\bsfPsi}{0}{bsfletters}{'011}
\DeclareMathSymbol{\ssfPsi}{0}{ssfletters}{'011}
\DeclareMathSymbol{\bsfOmega}{0}{bsfletters}{'012}
\DeclareMathSymbol{\ssfOmega}{0}{ssfletters}{'012}



\newcommand{\tili}{\widetilde{i}}

\newcommand{\tilv}{\widetilde{v}}



\newcommand{\bGamma}{\bm{\Gamma}}






\def\norm#1{\left\| #1 \right\|}
\def\norm2#1{\left\| #1 \right\|_2}
\def\norm22#1{\left\| #1 \right\|_2^2}



\DeclareMathOperator{\diag}{diag}

\DeclareMathOperator{\rank}{rank}



\newcommand{\qednew}{\nobreak \ifvmode \relax \else
      \ifdim\lastskip<1.5em \hskip-\lastskip
      \hskip1.5em plus0em minus0.5em \fi \nobreak
      \vrule height0.75em width0.5em depth0.25em\fi}




\allowdisplaybreaks[1]
\usepackage{float}
\usepackage{cite}

\usepackage[normalem]{ulem}

\DeclareMathOperator{\Tr}{Tr}

\newcommand{\sft}[1]{{\sf{} #1}}


\title{Magnetic Beamforming for Wireless Power Transfer}
%


\author{Gang~Yang, Mohammad~ R.~Vedady~Moghadam, and Rui~Zhang 
\thanks{Gang~Yang, Mohammad~ R.~Vedady~Moghadam, and Rui~Zhang are with the Department of Electrical and Computer Engineering, National University of Singapore (e-mail:\{eleyagg, elezhang\}@nus.edu.sg, vedady.m@u.nus.edu).}} 

%
%
\begin{document}

\maketitle

\begin{abstract}
Magnetic resonant coupling (MRC) is an efficient method for realizing the near-field wireless power transfer (WPT). The use of multiple transmitters (TXs) each with one coil can be applied to enhance the WPT performance by focusing the magnetic fields from all TX coils in a beam toward the receiver (RX) coil, termed as ``magnetic beamforming''. In this paper, we study the optimal magnetic beamforming for an MRC-WPT system with multiple TXs and a single RX. We formulate an optimization problem to jointly design the currents flowing through different TXs so as to minimize the total power drawn from their voltage sources, subject to the minimum power required by the RX load as well as the TXs' constraints on the peak voltage and current. For the special case of identical TX resistances and neglecting all TXs' constraints on the peak voltage and current, we show that the optimal current magnitude of each TX is proportional to the mutual inductance between its TX coil and the RX coil. In general, the problem is a non-convex quadratically constrained quadratic programming (QCQP) problem, which is reformulated as a semidefinite programming (SDP) problem. We show that its semidefinite relaxation (SDR) is tight. Numerical results show that magnetic beamforming significantly enhances the deliverable power as well as the WPT efficiency over the uncoordinated benchmark scheme of equal current allocation.
\end{abstract}

\begin{keywords}
Wireless power transfer, magnetic resonant coupling, magnetic beamforming, semidefinite relaxation
\end{keywords}
%

\section{Introduction}\label{introduction}
Near-filed wireless power transfer (WPT) has been attracting growing interest, due to its high efficiency for power transmission. Near-field WPT is realized by inductive coupling (IC) \cite{MurakamiSatoh96, KimCho01} for short-range applications in centimeters, or magnetic resonant coupling (MRC) \cite{KursSoljatic07, HuiLee14} for mid-range applications up to a couple of meters. With the short-range WPT technology reaching the stage of commercial use, the mid-range WPT technology has been gathering momentum in the last decade. The recent progress on mid-range WPT is referred to the review paper \cite{HuiLee14} and the references therein.

The MRC-WPT system with multiple transmitters (TXs) and/or multiple receivers (RXs) has been studied in the literature \cite{YoonLing11, AhnHong13, JadidianKatabi14, RezaZhang:C15}. The MRC-WPT system with only two TXs and one RX is studied in \cite{YoonLing11, AhnHong13}, while the obtained results cannot be readily extended to the case of more than two TXs. Recently, an ``Magnetic MIMO'' charging system~\cite{JadidianKatabi14} can charge a phone inside of a user's pocket 40cm away from the array of TX coils, independently of the phone's orientation. For an MRC-WPT system with multiple RXs, the load resistances of the RXs are jointly optimized in \cite{RezaZhang:C15} to minimize the total transmit power and address the ``near-far'' fairness problem. Deploying multiple TXs can help focus the magnetic fields on the RX~\cite{JadidianKatabi14}, in a manner analogous to beamforming in far-field wireless communications~\cite{GershmanSPM10}. However, to our best knowledge, there has been no prior work that designs the magnetic beamforming from a signal processing and optimization perspective, for an MRC-WPT system with arbitrary number of TXs, which thus motivates our work.

In this paper, as shown in Fig.~\ref{fig:Fig1}, we consider an MRC-WPT system with a single RX and multiple TXs of which their currents (or equivalently source voltages) can be adjusted such that the magnetic fields are constructively combined at the RX, thus achieving a magnetic beamforming gain. We formulate a problem to minimize the total power drawn from the voltage sources of all TXs by designing the currents flowing through different TXs, subject to the minimum power required by the RX load as well as the TXs' constraints on the peak voltage and current. For the special case of identical TX resistances and neglecting the TXs' constraints, the optimal current magnitude of each TX is shown to be proportional to the mutual inductance between its TX coil and the RX coil. In general, our formulated problem is a non-convex quadratically constrained quadratic programming (QCQP) problem. It is recast as a semidefinite programming (SDP) problem. We show that its semidefinite relaxation (SDR) \cite{LuoZhangSPM10, HuangPalomar10} is tight, i.e., the existing optimal solution to the SDR problem is always rank-one. The optimal solution to the QCQP problem can thus be obtained via standard convex optimization methods~\cite{ConvecOptBoyd04}. Numerical results show that magnetic beamforming significantly enhances the deliverable power and the WPT efficiency over the benchmark scheme of equal current allocation. 
\section{System Model} \label{system_model}
As shown in Fig.~\ref{fig:Fig1}, we consider an MRC-WPT system with $N \geq 1$ TXs each equipped with a single coil, indexed by $n$, and a single RX with a coil, index as $0$. Each TX $n$ is connected to a stable energy source supplying sinusoidal voltage over time given by $\tilv_{ n}(t) = {\mathrm{Re}} \{v_{ n} e^{jwt}\}$, with the complex $v_{ n}$ denoting an adjustable voltage and $w > 0$ denoting its operating angular frequency. Let $\tili_{ n}(t) = {\mathrm{Re}} \{i_{ n} e^{jwt}\}$ denote the steady-state current flowing through TX $n$, with the complex current $i_{ n}$. This current produces a time-varying magnetic flux in the $n$-th TX coil, which passes through the RX coil and induces time-varying currents in it. Let $\tili_{0}(t) = {\mathrm{Re}} \{i_{ 0} e^{jwt}\}$ be the steady-state current at the RX, with the complex current $i_{ 0}$.

We use $M_{n0}$ and $M_{n_1 n_2}$ to denote the mutual inductance between the $n$-th TX coil and the RX coil, and the mutual inductance between the $n_1$-th TX coil and the $n_2$-th TX coil, respectively. Each mutual inductance plays the role of magnetic channel between a pair of coils, and is a positive or negative real number depending on the physical characteristics, the relative distance, and the orientations of a pair of coils, etc. \cite{JadidianKatabi14}.


\begin{figure} [htbp]
\centering
\includegraphics[width=.65\columnwidth]{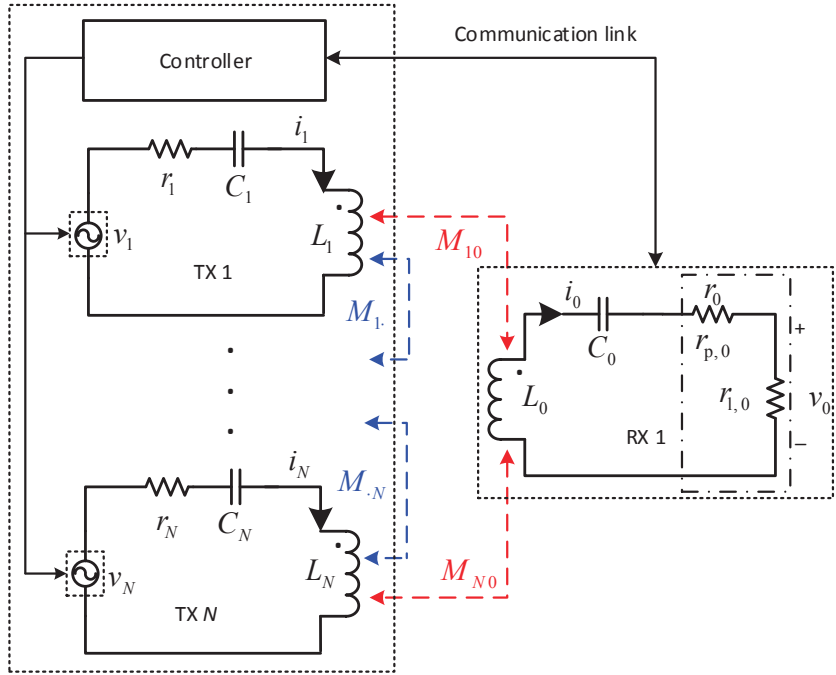}
\caption{System model of an MRC-WPT system}
\label{fig:Fig1}
\end{figure}



We denote the self-inductance and the capacity of the $n$-th TX coil (RX coil) by $L_{ n} >0$ ($L_0 > 0$) and $C_{ n} >0$ ($C_0 > 0$), respectively. The capacitors are chosen such that all TXs and the RX have the same resonant angular frequency $w$. We use $r_{ n} >0$ to denote the sum of the source resistance and the internal parasitic resistance of the $n$-th TX. The resistance of the RX, denoted by $r_{ 0} >0$, consists of the parasitic resistance $r_{ \sft p, 0} >0$ and the load resistance $r_{ \sft l, 0} >0$, i.e., $r_{ 0} = r_{ \sft p, 0} + r_{ \sft l, 0}$. The load is assumed to be purely resistive.




In this paper, we assume that all electrical parameters are fixed and known by the central controller. For analytical convenience, we treat the complex currents $i_{ n}$'s flowing through TXs as design parameters\footnote{In practice, it is more convenient to use a voltage source rather than a current source. However, by using standard circuit theory, one can easily compute the source voltages $v_{ n}$'s that generate the the required currents $i_{ n}$'s.}, which are adjusted to realize magnetic beamforming.

By applying Kirchhoff's circuit law to the RX, we obtain its current $i_{ 0}$ as follows
\begin{align}
 i_{ 0} &= \frac{jw}{r_0} \sum \limits_{n=1}^N M_{0n} i_{ n}. \label{eq:current_receiver_q}
\end{align}

\noindent Define the vector of mutual inductances between the RX coil and all TX coils as $\boldm = [M_{01} \; M_{02} \; \cdots \; $ $ M_{0N}]^H$.
From \eqref{eq:current_receiver_q}, the power delivered to the RX load is
\begin{align}
  p_0 &= \frac{1}{2} |i_0|^2 r_{ \sft l, 0} =\frac{w^2 r_{ \sft l, 0} }{2 r_ 0^2} {\bi}^H \boldm \boldm^H {\bi}.
  \label{eq:load_power_vec_complex}
\end{align}

By applying Kirchhoff's circuit law to each TX $n$, we obtain its source voltage as follows
\begin{align}
  v_n&=r_n i_n + j w \sum \limits_{k=1, \neq n}^N M_{nk} i_k - j w  M_{n0} i_0 \nonumber \\
  &= \left(\! r_n \!+\!   \frac{M_{n0}^2 w^2}{r_0} \!\right) i_n \!+\! \sum \limits_{k \neq n}   \left(\! j w M_{nk} \!+\!   \frac{M_{n0} M_{0k} w^2}{r_0} \! \right) i_k. \label{eq:voltage_transmitter_n}
\end{align}

\noindent We define the $N$-order square matrix $\bB $ as follows
\begin{align}
\bB  &= {\overline{\bB}}  + j {\widehat{\bB}},\label{eq_complex_B}
\end{align}
where the elements of the real-part matrix ${\overline{\bB}}$ and the imaginary-part matrix $\widehat{\bB}$ are given by
\begin{align}
 {\overline{B}}_{nk}  &=
  \left\{ \begin{array}{cl}
  r_{ n} +   \frac{M_{n0}^2 w^2}{r_0} , &\mbox{if}\; k=n\\
    \frac{M_{n0} M_{0k} w^2}{r_0}, \qquad &\mbox{otherwise} \\
  \end{array}
  \right. \label{eq_complex_B_real}\\
  {\widehat{B}}_{nk} &= \left\{ \begin{array}{cl}
  0, &\mbox{if}\; k=n\\
  - w M_{ nk}, \qquad &\mbox{otherwise} \\
  \end{array}
  \right.\label{eq_complex_B_imaginar}
 \end{align}
The matrices $\bB , \; {\overline{\bB}} $ and ${\widehat{\bB}}$ are symmetric, due to the fact that $M_{ nk} = M_{ kn}, \; \forall n \neq k$. Denote the $n$-th column of the matrices $\bB , \; \overline{\bB} , \; \widehat{\bB}$ by $\bb_n ,\; \overline{\bb}_n , \; \widehat{\bb}_n$, respectively. Moreover, the matrix ${\overline{\bB}} $ is positive semidefinite (PSD), as it can be rewritten as follows
 \begin{align}
   {\overline{\bB}} = \bR + \frac{w^2 \boldm \boldm^H}{r_0}. \label{eq:barB}
 \end{align}


\noindent The source voltage of each TX $n$ can be equivalently rewritten as
\begin{align}
  v_{ n}= {\bb}_n^H  \bi. \label{eq:voltage_transmitter_n_vector}
\end{align}

From \eqref{eq_complex_B} and \eqref{eq:voltage_transmitter_n_vector}, the total power drawn from the sources of all TXs, denoted by $p$, is derived as follows 
\begin{align}
  p &=\frac{1}{2} {\mathrm{Re}} \left\{ \sum \limits_{n=1}^N \bi^H {\bb}_n  i_{ n}\right\}
    = \frac{1}{2}  {{\bi}}^H  {\overline{\bB}}  {{\bi}}. \label{eq:p_transmit_general_vec_complex}
\end{align}

\begin{myrem}\label{remark:M_tx_p}
  From \eqref{eq_complex_B_real} and~\eqref{eq:p_transmit_general_vec_complex}, we observe that the total power drawn from all TXs' voltage sources depends on the mutual inductance $M_{n0}$ between the coils of each TX $n$ and the RX, but it is independent of the mutual inductance $M_{ nk}$ between any pair of TX coils.
\end{myrem}

\section{Problem Formulation}\label{sec: formulation}
In this section, we formulate a problem to minimize the total power drawn from the voltage sources of all TXs by jointly designing the currents $\bi$ flowing through TXs, subject to practical constraints of the MRC-WPT system. Particularly, we consider the following constraints: first, the power delivered to the RX load should exceed a given power level $\beta_0 >0$, i.e., $p_0 \geq \beta_0$; second, the maximally allowable amplitude of the source voltage $v_{ n}$ is $V_{ n}$. i.e., $|v_n| \leq V_n$; third, the maximally allowable amplitude of the current $i_{ n}$ is $A_{ n}$, i.e., $|i_n| \leq A_n$. Let $\bQ_n$ be the rank-one matrix with the $n$-th diagonal element as one and all other elements as zero. From \eqref{eq:load_power_vec_complex}, \eqref{eq:voltage_transmitter_n_vector} and \eqref{eq:p_transmit_general_vec_complex}, the problem is thus formulated as follows
\begin{subequations}\label{eq:optimP1}
\begin{align}
   \mathrm{(P1)}: \  \ \ &\underset{ {\bi} \in \calC^N }{\text{min}} \ \
    \frac{1}{2}   {{\bi}}^H  {\overline{\bB}} {{\bi}} \label{eq:rewardP1} \\
    \quad \text{s. t.} \ \
        &\frac{w^2 r_{ \sft l, 0} }{2 r_0^2}  {\bi}^H \boldm \boldm^H {\bi} \geq \beta_0 \label{eq:const1P1} \\
        &\bi^H {\bb}_n {\bb}_n^H \bi \leq V_{ n}^2, \;\; n=1, \; 2, \; \cdots \; N \label{eq:const2P1} \\
        &\bi^H \bQ_n \bi \leq A_{ n}^2, \;\; n=1, \; 2, \; \cdots \; N \label{eq:const3P1}
\end{align}
\end{subequations}
$\mathrm{(P1)}$ is a complex-valued non-convex QCQP problem \cite{ConvecOptBoyd04}. Although solving such a problem is nontrivial in general \cite{LuoZhangSPM10, HuangPalomar10}, we obtain the optimal solution to $\mathrm{(P1)}$ in Section~\ref{sec: solution}.

\section{Optimal Solutions}\label{sec: solution}
\subsection{Optimal Solution to ($\mathrm{P1}$) without  Constraints~\eqref{eq:const2P1} and~\eqref{eq:const3P1}}
In this subsection, we consider the simplified ($\mathrm{P1}$) only with delivered load power constraint~\eqref{eq:const1P1} but without voltage and current constraints given in \eqref{eq:const2P1} and \eqref{eq:const3P1}, respectively, to get useful insights on magnetic beamforming. We observe that the real-part currents $\bar{\bi}$ and the imaginary-part currents $\hat{\bi}$ contribute in the same way to the total TX power in \eqref{eq:rewardP1} and the delivered load power in \eqref{eq:const1P1}, since both $\overline{\bB}$ and $\boldm \boldm^H$ are symmetric matrices. As a result, we can set $\hat{\bi}=\mathbf{0}$ and design $\bar{\bi}$ only, i.e., we need to solve
\begin{subequations}\label{eq:optimP2}
\begin{align}
  \mathrm{(P2)}: \ \ \ \underset{ \bar{\bi}  \in \bbR^N}{\text{min}} \ \
  & \frac{1}{2}  {\bar{\bi}}^H  {\overline{\bB}} {\bar{\bi}}   \label{eq:rewardP2} \\
   \quad \text{s. t.} \ \
       &\frac{w^2 r_{ \sft l, 0} }{2 r_0^2} \bar{\bi}^H \boldm \boldm^H \bar{\bi} \geq \beta_0 \label{eq:const1P2}
\end{align}
\end{subequations}
The optimal solution to $\mathrm{(P2)}$ is given as follows.
\begin{mythe}\label{the:solution_specialP2}
The optimal solution to $\mathrm{(P2)}$ is ${\bar{\bi}}^{\star} = \alpha \bu_1$, where $\alpha$ is a constant such that the constraint \eqref{eq:const1P2} holds with equality, and $\bu_1$ is the eigenvector associated with the minimum eigenvalue, denoted by $\gamma_1$, of the matrix
\begin{align}
  \bT=\bR + \frac{w^2 (r_0 -v^{\star} r_{ \sft l, 0})}{r_0^2} \boldm \boldm^H, \label{eq:solution_wo_constraint}
\end{align}
where $v^{\star}$ is chosen such that $\gamma_1 =0$.

Specifically, for the case of identical TX resistances,  i.e., $\bR=r \bI$,
the optimal currents to ($\mathrm{P2}$) is
 \begin{align}
   {\bar{\bi}}^{\star} = \frac{\alpha \boldm}{\| \boldm \|_2}\label{eq:solution_wo_constraint_common_r}.
 \end{align}
\end{mythe}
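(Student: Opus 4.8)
The plan is to treat $\mathrm{(P2)}$ as a quadratic program with a single quadratic constraint and to exhibit a dual certificate that simultaneously pins down the optimal direction and proves its optimality. First I would record two structural facts that are immediate from \eqref{eq:barB}: the objective matrix $\overline{\bB}=\bR+\frac{w^2}{r_0}\boldm\boldm^H$ is positive definite (since $\bR\succ0$), while the constraint matrix, abbreviated $\bC=\frac{w^2 r_{\sft l,0}}{r_0^2}\boldm\boldm^H$, is rank-one and positive semidefinite. Because $\overline{\bB}\succ0$, the objective strictly increases away from the origin while the origin is infeasible, so at any optimum the load-power constraint \eqref{eq:const1P2} is active; this is the first step and it forces the problem onto the boundary.

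The heart of the argument is to produce the multiplier $v^{\star}$. Writing $\bT=\overline{\bB}-v\bC$, which is exactly \eqref{eq:solution_wo_constraint} after substituting \eqref{eq:barB}, I would apply the matrix determinant lemma to obtain $\det(\bT)=\det(\overline{\bB})\big(1-v\,\tfrac{w^2 r_{\sft l,0}}{r_0^2}\boldm^H\overline{\bB}^{-1}\boldm\big)$, which vanishes precisely at $v^{\star}=\big(\tfrac{w^2 r_{\sft l,0}}{r_0^2}\boldm^H\overline{\bB}^{-1}\boldm\big)^{-1}>0$. Since $\bT=\overline{\bB}\succ0$ at $v=0$ and $\bC$ is a rank-one (hence eigenvalue-monotone) perturbation, the smallest eigenvalue $\gamma_1(v)$ is continuous and nonincreasing, the matrix stays positive definite on $[0,v^{\star})$ because its determinant does not vanish there, and $\gamma_1$ first reaches $0$ exactly at $v^{\star}$. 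Thus $\bT\succeq0$ with $\gamma_1=0$, matching the theorem's characterization, and the associated null vector obeys $\overline{\bB}\bu_1=v^{\star}\bC\bu_1$, so $\bu_1\propto\overline{\bB}^{-1}\boldm$ and in particular $\boldm^H\bu_1\neq0$.

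With $\bT\succeq0$ in hand, optimality follows from a one-line weak-duality bound. For every feasible $\bar{\bi}$ the splitting $\overline{\bB}=\bT+v^{\star}\bC$ gives $\tfrac12\bar{\bi}^H\overline{\bB}\bar{\bi}=\tfrac12\bar{\bi}^H\bT\bar{\bi}+v^{\star}\big(\tfrac12\bar{\bi}^H\bC\bar{\bi}\big)\ge v^{\star}\beta_0$, using $\bT\succeq0$ for the first term and the constraint \eqref{eq:const1P2} for the second. The candidate $\bar{\bi}^{\star}=\alpha\bu_1$ annihilates the first term, and with $\alpha$ chosen so that \eqref{eq:const1P2} holds with equality (possible because $\boldm^H\bu_1\neq0$) the second term equals $v^{\star}\beta_0$; hence $\bar{\bi}^{\star}$ attains the lower bound and is optimal. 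Finally, the special case $\bR=r\bI$ is a short corollary: then $\overline{\bB}\boldm=(r+\tfrac{w^2}{r_0}\|\boldm\|_2^2)\boldm$, so $\boldm$ is an eigenvector of $\overline{\bB}$ and of $\bT$ (equivalently, Sherman--Morrison gives $\overline{\bB}^{-1}\boldm\propto\boldm$), whence $\bu_1\propto\boldm$ and $\bar{\bi}^{\star}=\alpha\boldm/\|\boldm\|_2$, recovering \eqref{eq:solution_wo_constraint_common_r}.

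The one delicate point is the middle step: because $\bC$ is rank-deficient this is not a textbook generalized Rayleigh quotient, so the existence of a finite $v^{\star}$ for which $\bT$ is \emph{simultaneously} positive semidefinite and singular must be argued from the determinant lemma together with eigenvalue monotonicity, rather than quoted. Everything else is routine bookkeeping.
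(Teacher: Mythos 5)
Your proof is correct, and it reaches the same destination as the paper's --- a Lagrange-multiplier value $v^{\star}$ at which the shifted matrix $\bT=\overline{\bB}-v^{\star}\,\tfrac{w^2 r_{\sft{l},0}}{r_0^2}\boldm\boldm^H$ becomes singular PSD, with the optimal current along its null eigenvector --- but it gets there by a more explicit and more complete route. The paper forms the Lagrange dual, diagonalizes $\bT$ by an SVD, and then simply asserts that boundedness-below and dual maximization force $\gamma_1=0$ (``it is standard to show\ldots''), without ever exhibiting $v^{\star}$ in closed form for general $\bR$, without identifying $\bu_1$ beyond the special case $\bR=r\bI$, and without explicitly verifying that the primal candidate $\alpha\bu_1$ attains the dual bound. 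You supply all three missing pieces: the matrix determinant lemma gives $v^{\star}=\bigl(\tfrac{w^2 r_{\sft{l},0}}{r_0^2}\boldm^H\overline{\bB}^{-1}\boldm\bigr)^{-1}$ in closed form; the eigenvalue-monotonicity argument (eigenvalues of $\overline{\bB}-v\bC$ are continuous and nonincreasing in $v$, and the determinant, being affine in $v$, vanishes first at $v^{\star}$) rigorously establishes that $\bT\succeq 0$ with $\gamma_1=0$ there, which is exactly the delicate point the paper waves through; and the splitting $\overline{\bB}=\bT+v^{\star}\bC$ yields a one-line certificate that every feasible point has objective at least $v^{\star}\beta_0$ while $\alpha\bu_1$ achieves it. You also correctly note $\bu_1\propto\overline{\bB}^{-1}\boldm$, hence $\boldm^H\bu_1\neq 0$, which guarantees the scaling $\alpha$ exists --- another point the paper leaves implicit. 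The special case $\bR=r\bI$ follows identically in both treatments. In short: same dual-certificate idea, but your version is self-contained where the paper's appeals to ``standard'' facts, and it additionally produces the optimal value $v^{\star}\beta_0$ as a byproduct.
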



\begin{proof}
We construct the Lagrange function of $\mathrm{(P2)}$ as 
\begin{align}
  L(\bar{\bi}, v) &= \frac{1}{2} {\bar{\bi}}^H  {\overline{\bB}} {\bar{\bi}} + v \left( \beta_0 - \frac{w^2 r_{ \sft l, 0} }{2r_0^2} \bar{\bi}^H \boldm \boldm^H \bar{\bi}\right) \label{eq:Lagrangian} 
\end{align}

Then, the Lagrange dual function is given by
\begin{align}
L(v) &= \beta_0 v + \underset{{\bar{\bi}}}{\inf} \;\left( \frac{1}{2} {\bar{\bi}}^H  {\overline{\bB}} {\bar{\bi}} -  \frac{w^2 r_{ \sft l, 0} v }{2r_0^2} \bar{\bi}^H \boldm \boldm^H \bar{\bi} \right) \nonumber \\
  &= \beta_0 v \!+\! \underset{{\bar{\bi}}}{\inf} \;\left( \frac{1}{2} {\bar{\bi}}^H  \left(\bR + \frac{w^2 \boldm \boldm^H}{r_{ \sft l, 0} + r_{ \sft p, 0}}\right) {\bar{\bi}} \!-\!  \frac{w^2 r_{ \sft l, 0} v }{2r_0^2} \bar{\bi}^H \boldm \boldm^H \bar{\bi} \right) \nonumber \\
    &=  \beta_0 v \!+\! \underset{{\bar{\bi}}}{\inf} \;\frac{1}{2} {\bar{\bi}}^H  \! \left( \bR \!+\! \frac{w^2 ((1 \!-\! v) r_{ \sft l, 0} \!+\! r_{ \sft p, 0})}{r_0^2} \boldm \boldm^H \right) \bar{\bi}. \label{eq:LagrangeDual_1}
\end{align}
To obtain the best lower bound on the optimal objective value of ($\mathrm{P2}$), the dual variable $v$ should be optimized to maximize the Lagrange dual~\eqref{eq:LagrangeDual_1}. For dual feasibility, the Lagrange dual function~\eqref{eq:LagrangeDual_1} should be bounded below. For convenience, we denote the following singular-value-decomposition (SVD)
\begin{align}
 \bR + \frac{w^2 ((1-v) r_{ \sft l, 0} + r_{ \sft p, 0})}{r_0^2} \boldm \boldm^H = \bU \bGamma \bU^H, \label{eq:SVD_specialP3}
\end{align}
where the matrix $\bU = [\bu_1 \; \bu_2 \; \cdots \; \bu_N]$ is orthogonal, and $\bGamma = \diag\{\gamma_1, \cdots, \gamma_N\}$, with $\gamma_1 \leq \gamma_2 \leq \cdots \leq \gamma_N$. For the case of arbitrary transmitter resistances, the Lagrangian in \eqref{eq:Lagrangian} is bounded below in $\bar{\bi}$ and the Lagrange dual function \eqref{eq:LagrangeDual_1} is maximized, only when $v$ is chosen as $v^{\star}$ such that $\gamma_1 =0$.

Moreover, we observe that the objective \eqref{eq:rewardP2} is minimized when the constraint \eqref{eq:const1P2} holds with equality, since both $\overline{\bB}$ and the matrix $\boldm \boldm^H$ are PSD. Hence, the optimal current can be written as ${\bar{\bi}}^{\star} = \alpha \bu_1$, where $\bu_1$ is the eigenvector associated with the eigenvalue $\gamma_1 =0$, and $\alpha$ is a constant such that the constraint \eqref{eq:const1P2} holds with equality.

For the case of identical transmitter resistance, i.e., $\bR = r \bI_N$,
from the isometric property of the identity matrix $\bI_N$, the diagonal matrix $\bGamma$ is given by
\begin{align}
\bGamma &= \diag \left\{ r + \frac{w^2 ((1-v) r_{ \sft l, 0} + r_{ \sft p, 0})}{r_0^2}, \; r,\; \cdots, \; r\right\}
\nonumber
\end{align}
and the eigenvector $\bu_1 = \frac{\boldm}{\| \boldm \|_2}$, and $\bu_n, \; \forall n \geq 2$, are arbitrarily orthogonal vectors constructed by methods such as Gram$-$Schmidt method. {It is standard to show that the Lagrangian in \eqref{eq:Lagrangian} is bounded below in $\bar{\bi}$ and the Lagrange dual function \eqref{eq:LagrangeDual_1} is maximized, only when $v$ is chosen such that the first eigenvalue is zero}, i.e., the optimal dual variable is
\begin{align}
v^{\star} = 1 + \frac{r r_0^2 }{r_{ \sft l, 0} w^2} + \frac{r_{ \sft p, 0}}{r_{ \sft l, 0}},
\end{align}
and the optimal current is given by
\begin{align}
{\bar{\bi}}^{\star} = \frac{\alpha \boldm}{\| \boldm \|_2}, \label{eq:opt_current_wpecial}
\end{align}
where $\alpha$ is a constant such that \eqref{eq:const1P2} holds with equality.
\end{proof}

In fact, all TX currents can be rotated by arbitrarily common phase.
\begin{myrem}
 Theorem \ref{the:solution_specialP2} implies that for the case of identical TX resistances, the optimal current magnitude of each TX $n$ is proportional to the mutual inductance $M_{0n}$ between the RX and TX $n$. This is analogous to the traditional maximum-ratio-transmission (MRT) beamforming in wireless communications~\cite{GershmanSPM10}. The magnetic beamforming also differs from the traditional beamforming in wireless communications. The former operates over the near-field magnetic flux, and only the TX current magnitudes are adjusted according to the real magnetic channels (i.e., positive or negative mutual inductances)~\cite{JadidianKatabi14}; while the latter operates over the far-field propagating waves, and the amplitude as well as the phase of the signal at each TX are adjusted according to complex wireless channels including amplitude fluctuation and phase shift due to prorogation delay~\cite{GoldsmithWC2005}.
\end{myrem}


\subsection{Optimal Solution to ($\mathrm{P1}$)}
Define $\bX \triangleq \bi \bi^H$, $\bM \triangleq \boldm \boldm^H$, and $\bB_n \triangleq \bb_n \bb_n^H$. Thus, $\mathrm{(P1)}$ can be equivalently rewritten as the following SDP problem
\begin{subequations}\label{eq:optimP2-SDP}
\begin{align}
   (\mathrm{P1}\mathrm{-SDP}): \ \ \ &\underset{ \bX \in \bbC^{N \times N}}{\text{min}} \ \
    \frac{1}{2}   \Tr \left( {\overline{\bB}}  \bX \right) \label{eq:rewardP1SDP} \\
    \quad \text{s. t.} \ \
        &\Tr \left( {\bM} \bX \right) \geq \frac{2 r_0^2 \beta_0}{w^2 r_{ \sft l, 0} } \label{eq:const1P1SDP} \\
        &\Tr \left( {\bB_n} \bX \right) \leq V_{ n}^2, \;\; n=1, \; \cdots \; N \label{eq:const2P1SDP}
        \\
        &\Tr \left( \bQ_n \bX \right) \leq A_{ n}^2, \;\; n=1, \; \cdots \; N \label{eq:const3P1SDP} \\
        &  \bX \succcurlyeq 0, \; \rank \left(\bX\right)=1 \label{eq:const4P1SDP}
\end{align}
\end{subequations}
where $\bX \succcurlyeq 0$ indicates that $\bX$ is PSD.



As well known, the rank constraint in \eqref{eq:const4P1SDP} is non-convex \cite{LuoZhangSPM10,HuangPalomar10}. By ignoring the rank-one constraint in~\eqref{eq:const4P1SDP}, we obtain the SDR of $(\mathrm{P1}\mathrm{-SDP})$, denoted by $(\mathrm{P1}\mathrm{-SDR})$, which is convex. Moreover, we have the following theorem on $(\mathrm{P1}\mathrm{-SDR})$ and the optimal solution to $(\mathrm{P1})$.
\begin{mythe}\label{the:rank-one}
  The SDR of $(\mathrm{P1}\mathrm{-SDP})$ is tight, i.e., the existing optimal solution $\bX^{\star}$ to $(\mathrm{P1}\mathrm{-SDR})$ is always rank-one (i.e., $\bX^{\star} = \bi^{\star} \left(\bi^{\star}\right)^H$). The optimal solution to $(\mathrm{P1})$ is $\bi^{\star}$.
\end{mythe}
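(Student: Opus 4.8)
The plan is to prove tightness through the Karush--Kuhn--Tucker (KKT) conditions of $(\mathrm{P1}\mathrm{-SDR})$, exploiting the fact that every matrix appearing in the objective and the constraints is either diagonal or rank-one. Since the rank constraint has been dropped, $(\mathrm{P1}\mathrm{-SDR})$ is convex; assuming the load target $\beta_0$ is achievable within the voltage and current limits (so that a strictly feasible $\bX \succ 0$ exists), Slater's condition holds, strong duality applies, and an optimal primal--dual pair exists with the KKT conditions necessary and sufficient. First I would introduce the dual variable $\lambda \geq 0$ for the load-power constraint \eqref{eq:const1P1SDP}, multipliers $\mu_n, \nu_n \geq 0$ for \eqref{eq:const2P1SDP} and \eqref{eq:const3P1SDP}, and a PSD matrix $\bZ \succcurlyeq 0$ for $\bX \succcurlyeq 0$.

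Stationarity of the Lagrangian in $\bX$ gives the dual matrix
\begin{align}
\bZ = \tfrac{1}{2}\overline{\bB} + \sum_{n=1}^N \mu_n \bB_n + \sum_{n=1}^N \nu_n \bQ_n - \lambda \bM, \nonumber
\end{align}
together with complementary slackness $\bZ \bX^{\star} = 0$, which confines the range of $\bX^{\star}$ to the null space of $\bZ$, so that $\rank(\bX^{\star}) \leq N - \rank(\bZ)$. The crux is therefore to show $\rank(\bZ) \geq N-1$. Substituting $\overline{\bB} = \bR + \tfrac{w^2}{r_0}\boldm\boldm^H$ from \eqref{eq:barB} and collecting the diagonal contributions, I would rewrite
\begin{align}
\bZ = \bP - \Big(\lambda - \tfrac{w^2}{2 r_0}\Big)\boldm \boldm^H, \qquad \bP := \tfrac{1}{2}\bR + \sum_{n=1}^N \nu_n \bQ_n + \sum_{n=1}^N \mu_n \bb_n \bb_n^H, \nonumber
\end{align}
where $\bP \succ 0$ since $\bR \succ 0$ and every remaining summand is PSD.

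The key step is to pin down the sign of the scalar multiplying the rank-one term. If $\lambda \leq \tfrac{w^2}{2r_0}$, then $\bZ \succcurlyeq \bP \succ 0$ is full rank, complementary slackness forces $\bX^{\star} = 0$, and the load-power constraint \eqref{eq:const1P1SDP} (with strictly positive right-hand side) is violated; hence $\lambda > \tfrac{w^2}{2r_0}$. Setting $t := \lambda - \tfrac{w^2}{2r_0} > 0$ and applying the congruence by $\bP^{-1/2}$, one has $\bP^{-1/2}\bZ\bP^{-1/2} = \bI - t\,(\bP^{-1/2}\boldm)(\bP^{-1/2}\boldm)^H$, the identity minus a rank-one matrix, whose eigenvalue $1$ has multiplicity at least $N-1$; since $\bP^{-1/2}$ is invertible, $\rank(\bZ) \geq N-1$. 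Combining with $\rank(\bX^{\star}) \leq N - \rank(\bZ)$ yields $\rank(\bX^{\star}) \leq 1$, while $\bX^{\star} \neq 0$ by \eqref{eq:const1P1SDP}, so $\rank(\bX^{\star}) = 1$. Factoring $\bX^{\star} = \bi^{\star}(\bi^{\star})^H$ then recovers a feasible rank-one point attaining the relaxed optimum, so $\bi^{\star}$ is optimal for $(\mathrm{P1})$.

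The main obstacle I anticipate is the bookkeeping that secures $t > 0$: one must rule out the degenerate case $\bX^{\star}=0$ and verify that no choice of the nonnegative multipliers $\mu_n, \nu_n$ can either spoil the positive definiteness of $\bP$ or inject more than a single negative-curvature direction into $\bZ$. Once the decomposition $\bZ = \bP - t\,\boldm\boldm^H$ is established, the remaining ingredients---strong duality, complementary slackness for PSD matrices, and the rank-one perturbation bound---are routine.
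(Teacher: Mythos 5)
Your proof is correct and follows the same overall strategy as the paper: both derive the rank-one property from the KKT conditions of $(\mathrm{P1}\mathrm{-SDR})$, exploiting the fact that the only non-PSD contribution to the dual slack matrix is the rank-one term $-\lambda \bM$. The execution differs slightly. The paper postmultiplies the stationarity condition by $\bX^{\star}$, uses complementary slackness $\bS^{\star}\bX^{\star}=\bm{0}$ to obtain $\bigl(\tfrac{1}{2}\overline{\bB}+\sum_{n}\rho_n^{\star}\bB_n+\sum_{n}\mu_n^{\star}\bQ_n\bigr)\bX^{\star}=\lambda^{\star}\bM\bX^{\star}$, and then takes ranks: the left factor has full rank, while the right-hand side has rank at most $\rank(\bM)=1$. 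You instead bound $\rank(\bZ)\geq N-1$ directly via the congruence $\bP^{-1/2}\bZ\bP^{-1/2}=\bI-t\,\bq\bq^H$ with $\bq=\bP^{-1/2}\boldm$, and then invoke $\bZ\bX^{\star}=\bm{0}$. Your route requires the extra step of establishing $t=\lambda-\tfrac{w^2}{2r_0}>0$, which you handle correctly by contradiction with $\bX^{\star}=\bm{0}$; the paper's rearrangement sidesteps the sign of the multiplier entirely (if $\lambda^{\star}=0$ the same equation forces $\bX^{\star}=\bm{0}$, again contradicting the load constraint). On the other hand, you are more careful than the paper on two points it glosses over: you note explicitly that $\bP\succ 0$ (strict, because $\bR$ is diagonal with positive entries --- the paper only says $\overline{\bB}$ is PSD yet then asserts full rank), and you rule out the degenerate case $\rank(\bX^{\star})=0$ before concluding $\rank(\bX^{\star})=1$. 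Both arguments are sound; yours is marginally longer but tighter in its bookkeeping.
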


\begin{proof}
Let $\lambda \geq 0, \ \bm{\rho}=(\rho_1, \; \cdots, \rho_N)^T, \ \forall \rho_n \geq 0, \  \bm{\mu} =(\mu_1, \cdots, \mu_N), \ \forall \mu_n \geq 0,$ be the dual variables corresponding to the constraint(s) given in~\eqref{eq:const1P1},~\eqref{eq:const2P1}, and~\eqref{eq:const3P1}, respectively. Let the matrix $\bS  \succcurlyeq 0$ be the dual variable corresponding to the constraint $\bX \succcurlyeq 0$ in~\eqref{eq:const4P1SDP}. The Lagrangian of $(\mathrm{P1}\mathrm{-SDR})$ is then written as
\begin{align}
  L(\bX, \lambda, \bm{\rho}, \bS)  &= \frac{1}{2}   \Tr \left( {\overline{\bB}}  \bX \right) - \lambda \left( \Tr \left( {\bM} \bX \right) - \frac{2 r_0^2 \beta_0}{w^2 r_{ \sft l, 0} }\right)  \!+ \nonumber \\
  &\quad  \!\! \sum \limits_{n=1}^N \! \rho_n \! \left( \Tr \left( {\bB_n} \bX \right) \!-\!A_{ n}^2 \right) \!+\! \! \sum \limits_{n=1}^N \! \mu_n \!  \left( \Tr \left( {\bQ_n} \bX \right) \!-\! D_{ n}^2 \right) \!-\! \Tr \left( \bS \bX \right).
\end{align}

Let $\bX^{\star}, \lambda^{\star}, \bm{\rho}^{\star}, {\bm{\mu}^{\star}}$, and $\bS^{\star}$ be the optimal primal variables and dual variables, respectively. Moreover, the Karush$-$Kuhn$-$Tucker (KKT) conditions are given by
\begin{align}
  \nabla_{\bX}  L(\bX^{\star}, \lambda^{\star}, \bm{\rho}^{\star}, \bm{\mu}^{\star}, \bS^{\star}) &= \frac{1}{2} {\overline{\bB}} -\lambda^{\star} \bM +  \sum \limits_{n=1}^N \rho_n^{\star} {\bB_n} + \sum \limits_{n=1}^N \mu_n^{\star} {\bQ_n}   - \bS^{\star} = \bm{0}. \label{eq:KKT_orthgonality} \\
   \bS^{\star} \bX^{\star} &= \bm{0}. \label{eq:KKT_complementarity}
\end{align}

Next, by postmultiplexing \eqref{eq:KKT_orthgonality} by $\bX^{\star}$ and substituting \eqref{eq:KKT_complementarity} into the resulting equation, we obtain
\begin{align}
  \frac{1}{2} {\overline{\bB}} \bX^{\star} -\lambda^{\star} \bM \bX^{\star} +  \sum \limits_{n=1}^N \rho_n^{\star} {\bB_n} \bX^{\star} +  \sum \limits_{n=1}^N \mu_n^{\star} {\bQ_n} \bX^{\star}= \bm{0}. \label{eq:KKT_orthgonality2}
\end{align}

\noindent We further have
\begin{align}
\rank \left( \left( \frac{1}{2} {\overline{\bB}}+  \sum \limits_{n=1}^N \rho_n^{\star} {\bB_n}  +  \sum \limits_{n=1}^N \mu_n^{\star} {\bQ_n} \right) \bX^{\star} \right)  =\rank \left( \bM \bX^{\star}\right) \leq \rank \left( \bM \right) =1. \label{eq:KKT_orthgonality3}
\end{align}
Since $\overline{\bB}$ is PSD, the matrix $\left(\! \frac{1}{2} {\overline{\bB}} \!+\!  \sum \limits_{n=1}^N \rho_n^{\star} {\bB_n} \!+\!  \sum \limits_{n=1}^N \mu_n^{\star} {\bQ_n} \! \right)$ has full rank. Hence, the relation \eqref{eq:KKT_orthgonality3} implies that
\begin{align}
 \! \! \! \rank \left( \bX^{\star} \right)  \!= \! \rank \!  \left( \! \left(\! \frac{1}{2} {\overline{\bB}} \!+\!  \sum \limits_{n=1}^N \rho_n^{\star} {\bB_n} \!+\!  \sum \limits_{n=1}^N \mu_n^{\star} {\bQ_n} \!\right) \bX^{\star} \! \right) \!=\! 1. \label{eq:KKT_orthgonality4}
\end{align}
Since $\bX^{\star}$ is rank-one, it can be written as $\bX^{\star} = \bi^{\star} \left(\bi^{\star}\right)^H$. The vector $\bi^{\star}$ is thus the solution to $(\mathrm{P1})$. This completes the proof.
\end{proof}



\begin{myrem}
As a convex problem, $(\mathrm{P1}\mathrm{-SDR})$ can be polynomially solved via an interior-point method~\cite{ConvecOptBoyd04}, to arbitrary accuracy. The optimal solution to $(\mathrm{P1})$ is directly obtained from the optimal solution to $(\mathrm{P1}\mathrm{-SDR})$, without any postprocessing required.
\end{myrem}

\section{Numerical Results}\label{sec: simulation}
As shown in Fig.~\ref{fig:Fig2}, we consider the setup with $N=5$ TX coils and one RX coil, each of which has 100 turns and a radius of 0.1 meter. We use cooper wire with radius of 0.1 millimeter for all coils. All the coils are horizontal with respect to the $xy$-plane. The total resistance of each TX is the same as $0.336 \Omega$. For the RX, its parasitic resistance and load resistance are $r_{ \sft {p}, 0} =0.336 \Omega$ and $r_{\sft{l},0}=50 \Omega$, respectively. The self and mutual inductances are given in Table~\ref{table_inductance}. All the capacitors are chosen such that the resonance angular frequency is $w=6.78 \times 2 \pi$ rad/second \cite{RezaZhang:C15}. We assume that the constraint thresholds $V_n=30\sqrt{2}$~V and $A_{n}=5\sqrt{2}$~A.
\begin{figure} [htbp]
\centering
\includegraphics[width=.65\columnwidth]{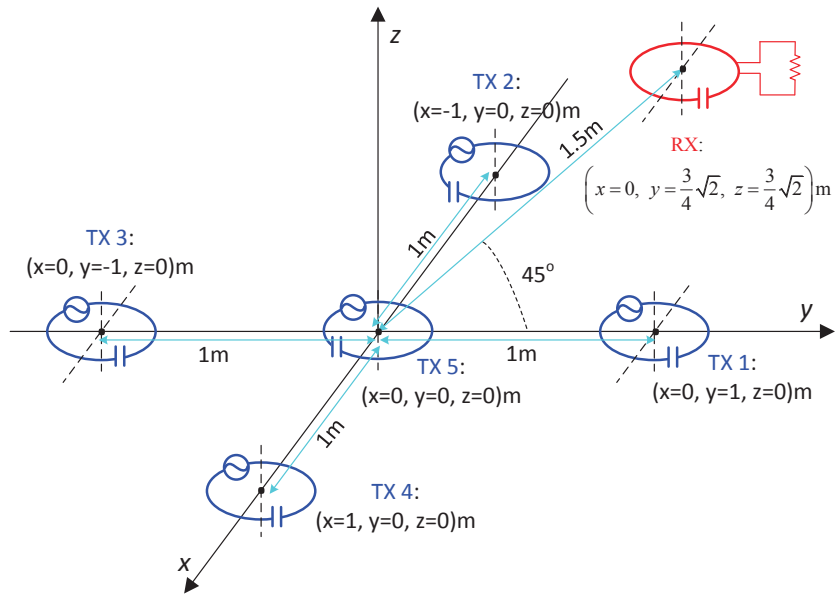}
\caption{System setup}
\label{fig:Fig2}
\end{figure}
\begin{table} [htbp]
\centering
\caption{Mutual/Self inductances ($\mu$H)} \label{table_inductance}
\small{
\begin{tabular}{*{6}{c}}
  \hline \hline
   & TX 1 & TX 2 & TX 3 & TX 4 & TX 5\\
  \hline
  TX 1   & 5886.8  & 0.3565 & 0.1253 & 0.3565  & 0.2984  \\
  TX 2  & 0.3565  & 5886.8 & 0.3565  & 0.1253  & 0.2984  \\
  TX 3   & 0.1253  & 0.3565 & 5886.8  & 0.3565  &  0.2984 \\
  TX 4   & 0.3565  & 0.1253 & 0.3565  & 5886.8  &  0.2984 \\
  TX 5   & 0.2984  & 0.2984 & 0.2984 & 0.2984  & 5886.8  \\
  RX   & 1.6121  & 0.00781 & -0.0296 & 0.00781  & 0.1508  \\
  \hline
\end{tabular}
}
\end{table}


For comparison, we consider an uncoordinated benchmark scheme of equal current allocation, i.e., all TXs carry the same in-phase current which can be adjusted. In particular, we compare the WPT performance for the case with TXs' constraints on the peak voltage and current, and for the case without TXs' constraints, respectively. We define the efficiency of WPT as the ratio of the delivered load power $\beta_0$ to the total TX power $p$, i.e., $\eta \triangleq \frac{\beta_0}{p}$.


Fig.~\ref{fig:Fig_A_fixedR} plots the total TX power $p$ and the efficiency $\eta$ versus the delivered load power $\beta_0$. All curves show the feasible and optimal values. For the case without TXs' constraints, we observe that the WPT efficiencies by using magnetic beamforming and by using the benchmark are 87.1$\%$ and 62.0$\%$, respectively. Hence, the exact gain of magmatic beamforming is an increase of WPT efficiency by 25.1$\%$.

For the case with TXs' constraints, we observe that the magnetic beamforming can deliver more power up to 73.6~W to the RX with the efficiency of 74.1$\%$; while the benchmark can deliver at most 36.0~W to the RX load with the efficiency of 62.0$\%$. That is, the deliverable power is enhanced by 104.4$\%$ by using magnetic beamforming, over the uncoordinated benchmark of equal current allocation. Thus, besides the efficiency improvement, another important benefit of magnetic beamforming is the enhancement of the deliverable power, under practical TX constraints.


\begin{figure} [htbp]
\centering
\includegraphics[width=.65\columnwidth]{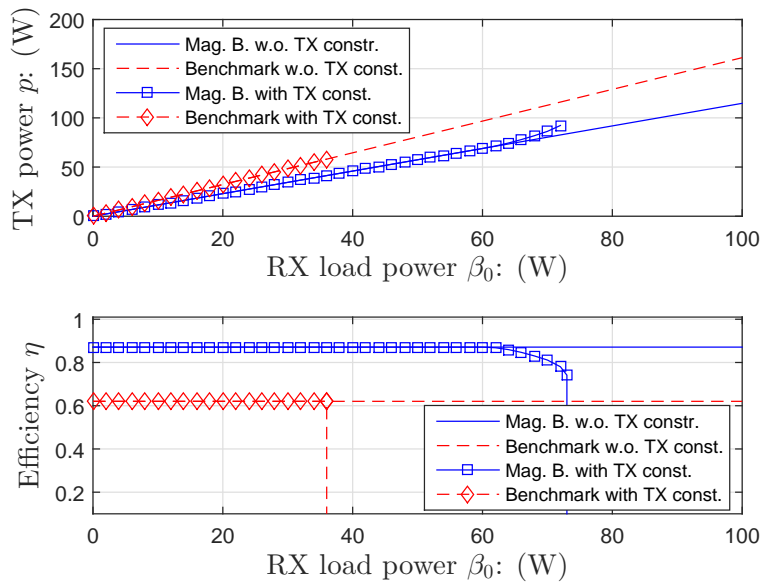}
\caption{Power and efficiency v.s. RX power $\beta_0$.} 
\label{fig:Fig_A_fixedR}
\end{figure}

Fig.~\ref{fig:Fig_A_fixedR} also shows that the WPT efficiency decreases for $60.5 < \beta_0  < 73.6$. To explain the decreasing efficiency and obtain insights for magnetic beamforming, we investigate the cases of $\beta_0=60$~W and 73.5~W in the following. The optimal currents and power of all TXs are obtained in Table~\ref{table_solution}. For $\beta_0=60$~W, TX 1 carries almost the peak current, and most energy is consumed by TX 1 that has the largest mutual inductance with the RX. This implies that the TXs with larger mutual inductances with the RX are favorable to carry higher currents, to achieve high efficiency of WPT. For $\beta_0=60$~W, all TX constraints are inactive, and it can be further checked that the current of each TX is proportional to its mutual inductance with the RX. This verifies \eqref{eq:solution_wo_constraint_common_r} in Theorem \ref{the:solution_specialP2}. To support higher RX load power of 73.5~W, we observe that TX 1, 3 and 5 carry the (maximally allowable) peak current, and TX 2 as well as TX 4 increase their carried currents. The cost is the decreased overall efficiency, due to smaller mutual inductance between TX 2, 3, 4, 5 and the RX, than that between TX 1 and the RX.

\begin{table} [htbp]
\centering
\caption{Optimal results for $\beta_0=50, 56$.} \label{table_solution}
\small{
\begin{tabular}{*{3}{c}}
  \hline \hline
   & $\beta_0=60$ & $\beta_0=73.5$\\
  \hline
  $(i_1^{\star}, p_1^{\star})$    & $(-7.0698, 68.25)$ & $(-7.071, 74.66)$ \\
  $(i_2^{\star}, p_2^{\star})$   & $(-0.0342, 0.0016)$ & $(-3.499, 2.22)$ \\
  $(i_3^{\star}, p_3^{\star})$   & $(0.1296, 0.023)$ & $(-7.071, 9.62)$ \\
  $(i_4^{\star}, p_4^{\star})$   & $(-0.0342, 0.0016)$ & $(-3.499, 2.22)$\\
    $(i_5^{\star}, p_5^{\star})$   & $(-0.6617,  0.598)$ & $(-7.071, 14.60)$\\
  \hline
\end{tabular}
}
\end{table}

\section{Conclusion}
This paper has studied the optimal magnetic beamforming for an MRC-WPT system with multiple TXs and a single RX. We formulate an optimization problem to minimize the total power drawn from the voltage sources of all TXs by jointly designing the currents flowing through different TXs, subject to the RX constraint on the required load power and the TXs' constraints on the peak voltage and current. For the special case of identical TX resistances and neglecting TXs' constraints, the optimal current of each TX is shown to be proportional to the mutual inductance between its TX coil and the RX coil. In general, the formulated non-convex QCQP problem is recast as an SDP problem. Its SDR is shown to be tight. Numerical results show that magnetic beamforming significantly enhances the deliverable power and the WPT efficiency, compared to the uncoordinated benchmark scheme of equal current allocation. Furthermore, it is numerically shown that TXs with large mutual inductances with the RX are favorable to carry high currents, to achieve high efficiency for WPT.

\renewcommand{\baselinestretch}{1.45}
\bibliography{IEEEabrv,reference_150528}
\bibliographystyle{IEEEtran}

\end{document}